\title{Mediated Cheap Talk Design (with proofs)}
\author{
   Itai Arieli,
   Ivan Geffner,
   Moshe Tennenholtz\thanks{The work by Ivan Geffner and Moshe Tennenholtz was supported by funding from
the European Research Council (ERC) under the European Union’s Horizon 2020
research and innovation programme (grant agreement 740435).}
}
\newcommand{\commentout}[1]{}
\theoremstyle{definition}
\newcommand{\real}{\mathbb{R}}
\newcommand{\p}{\mathbf{P}}
\newtheorem{lemma}{Lemma}
\newtheorem{theorem}{Theorem}
\newtheorem{corollary}{Corollary}
\newtheorem{definition}{Definition}
\newtheorem{example}{Example}
\newtheorem{exer            cise}[theorem]{Exercise}
\newtheorem*{lem:main}{Lemma \ref{lem:main}}
\newtheorem*{cor:int}{Corollary \ref{cor:int}}
\begin{document}

\maketitle

\begin{abstract}
We study an information design problem with two informed senders and a receiver in which, in contrast to traditional Bayesian persuasion settings, senders do not have commitment power. In our setting, a trusted mediator/platform gathers data from the senders and recommends the receiver which action to play. 
We characterize the set of implementable action distributions that can be obtained in equilibrium, and provide an $O(n \log n)$ algorithm (where $n$ is the number of states) that computes the optimal equilibrium for the senders.  Additionally, we show that the optimal equilibrium for the receiver can be obtained by a simple revelation mechanism. 

\end{abstract}





\section{Introduction}
The 
extensive
literature on information design and Bayesian persuasion
studies optimal information revelation policies for the informed player. The two leading models of information revelation are cheap talk \cite{crawford1982strategic} and Bayesian persuasion \cite{kamenica2011bayesian}. The main distinction 
between
these models is the underlying assumption
that in the Bayesian persuasion models the sender has commitment power in the way
she discloses the information. 

Commitment power in the Bayesian persuasion model is crucial 
(see, e.g., \cite{Conitzer2006-ur})
and, while it may hold in some real-world settings,\footnote{The leading motivating example of \cite{kamenica2011bayesian} is of a prosecutor persuading a judge.} it is often considered strong. Another fundamental assumption in Bayesian persuasion models is that the informed player is also the one that designs the information revelation policy. In  practice, however, information revelation can be determined by other external or legal constraints. For example, information revealed to a potential customer about a product is determined by the commerce platform based on information submitted by different suppliers.  

On the other hand, in the cheap talk model there is no commitment power. However, in many cases, this lack of commitment leads to a lack of expressive power from the 
sender and, as a result, may 
induce
highly inefficient outcomes. For example, consider the classic market for lemons example in \cite{akerlof1978market},
in which a marketer tries to sell a product to a customer 
that
plays the role of the receiver. The product can be of either good quality or bad quality with equal 
probability.
The customer would like to buy the product only if he believes that it is of good quality with a probability of at least $\frac{3}{4}$, and the sender 
always prefers that the product is bought.
In 
this
case, under a cheap talk equilibrium, the sender has no way to signal credible information to the receiver and the receiver never buys the product.  
This is a typical situation that arises in marketplaces 
that match sellers with buyers, or advertisers with consumers.   

In our model there is a finite state space of size $n$, two informed players (senders), and an additional uninformed player (the receiver) that determines the outcome of the game by playing a binary action from the set $A := \{0,1\}$ (this could represent buying a product or not, passing a law or not, etc.). The utility of each player is determined by the game state and by the action played by the receiver, and the incentives of the senders may not necessarily be aligned (e.g., senders can be a car seller and  a technician that tested the car, or two parties who studied  the monetary value of a law, or two suppliers of a product, etc.).
The state of the game is drawn from a prior distribution that is common knowledge among the players, but only the senders know its exact value. Thus, the senders' purpose is to reveal information to the receiver in such a way that the receiver plays the action that benefits them the most. Since the senders have no commitment power we are interested in cheap talk equilibria, in which it is never in the interest of the senders to be dishonest, and it is always in the interest of the receiver to play the action suggested by the protocol. As we show in this work, the existence of a second informed sender dramatically enriches the set of cheap talk equilibria that can be obtained.

We consider a \emph{mediated cheap talk} setting of communication between the senders and the receiver. In this setting, the senders communicate with a trusted mediator, and as a function of the two messages that the mediator receives, he sends an action recommendation (possibly at random) to the receiver. 
Our first result provides a characterization 
of
the 
truthful equilibria that are implementable in the mediator setting,
in which
it is always beneficial for the senders to report truthful information and for the receiver to play whatever is suggested by the mediator.
We then analyze
the case where the two senders have aligned preferences and provide an algorithm with $O(n\log n)$ steps to calculate the best equilibrium outcome and payoff for the senders.
We later extend this algorithm to find the optimal outcome for one of the senders whenever the senders have different incentives.
Finally, we study the best equilibrium for the receiver and show that the optimal revelation policy lies within a finite set of 
mechanisms.

A major motivation for our work is data-driven decision making. Recommendation systems and classifiers are at the heart of many systems and determine the offering for or grouping of users based on data collected and provided by data sources. While in the early days these systems were based on data aggregated from their users and from previous interactions with them, the explosion of data facilitated professional data aggregation, and systems are designed to work with external data sources. Needless to say, data sources may be strategic and may attempt to influence the system's decisions. In abstract terms, the system acts as a mediator aiming in implementing a policy, that is a mapping from a state (e.g., type of user) to an action (recommendation, group assignment) based on messages received by data sources that can access the state. In general there may be several data sources, each of which has access to the required data but they have different preferences regarding the policy to be 
implemented by the system. 
Therefore, the main theoretic question is 
to know
which policies can be implemented in the strategic game between the data providers. Given that knowledge, we can tackle the question of what would be a mechanism that maps data sources' messages to actions
that are
optimal when the system aims to implement a particular policy. Our work provides rigorous answers to the 
above
question. This is complementary to work exploiting commitment power in data-intensive tasks, such as segmentation (e.g. \cite{EmekFGLT12}) and incentive-compatible exploration and exploitation \cite{KremerMP13,BaharST19}; in our setting, information providers do not have commitment power. 

Finally, as shown in \cite{ADGH06}, the assumption of communicating with a trusted mediator in most cases can be replaced with the assumption that both the senders and the receiver can communicate via private authenticated channels. This is true as long as (a) there exists a punishment strategy for the senders/receiver, or (b) we allow an arbitrarily small probability of error. This means that if the receiver or the senders can punish other participants when they are caught deviating (e.g., by quitting the game if all outcomes give positive utilities to the players), the same equilibria that can be achieved in the mediator setting can also be achieved in a cheap talk equilibrium without a mediator. If there is no such punishment strategy, the sets of equilibria in both settings might not be equal, but for any equilibrium in the mediator setting, there exist equilibria in the unmediated setting that are arbitrarily close.
\subsection{Related Literature}

The literature on information design is too vast to address all the related work. We will therefore mention some key related papers. The work by \cite{krishna2001model} considers a setting that is similar to the one considered by \cite{crawford1982strategic}, where a real interval represents the set of states and actions. In this setting the receiver's and the senders' utilities are \emph{biased} by some factor  that afects their incentives and utility. In \cite{krishna2001model} there are  two informed senders that reveal information sequentially to the receiver. They consider the best receiver equilibrium and show that, when both senders are \emph{biased} in the same direction, it is never
 beneficial to consult both of them. By contrast, when senders are biased in opposite directions, it is always beneficial to consult both of them. Our setting is different than theirs as we consider a finite state space and a binary action set for the receiver. In addition, our focus is the best equilibrium for either both or one of the senders. 
 
In another work \cite{salamanca2021value} characterizes the optimal mediation for the sender in a sender-receiver game. Relatedly,
\cite{lipnowski2020cheap}, and \cite{kamenica2011bayesian} provide a geometric characterization of the best cheap talk equilibrium for the sender under the assumption that the sender's utility is state-independent. 
In \cite{Gan2022-gm} and \cite{Fujii2022-aa}, the authors study the complexity of finding equilibria in sequential decision-making settings and in settings where the receiver's actions are specified by combinatorial constraints, respectively.

\cite{kamenica2017competition} consider a setting with two senders in a Bayesian persuasion model. The two senders, as in the standard Bayesian persuasion model, have commitment power and they compete over information revelation.
The authors characterize the equilibrium outcomes in this setting.

Integrating mediators into a strategic setting is common in many game-theoretical works (e.g., \cite{aumann1987correlated}, \cite{morgan1999models}). In more recent work, \cite{kosenko2018mediated} and \cite{arieli2022bayesian} study mediators in a Bayesian persuasion model. In these works the mediators are strategic players that may affect information revelation to the receiver. By contrast, in this work we remain agnostic to the incentives of the mediator and, as in \cite{aumann1987correlated}, the mediator only serves as a correlation device.

\section{The Model}

Throughout the rest of the paper we will focus on the mediator setting since the mechanisms involved are much more simple than those required for the cheap talk setting, since the latter require non-trivial distributed computing primitives. Fortunately, as shown in~\cite{ADGH06}, all results obtained in the mediator setting also apply in the cheap talk setting, except for an arbitrarily small probability of error.

We start by suppressing the dependence of feasibility in the receiver preferences and instead only study the mechanism with two players. 
Consider a finite state space $\Omega$ with a common prior $\mu\in\Delta(\Omega)$. There are two players that will later play the role of the senders but can also be considered as two political parties. There is a bill that can be either approved or rejected. The utilities of the two players are $u_1,u_2:A\times\Omega\to\real$, where $A=\{0,1\}$. We assume that both players observe the realized state.  We call $A$ the action set. 

Define a communication protocol $(M_1,M_2,\tau)$ that is implemented by a mediator as follows. For $i=1,2$ the set $M_i$ is the finite message space of player $i$. The function $\tau:M_1\times M_2\to\Delta(A)$, which is implemented by the mediator, maps a pair of messages to a probability over the action $a\in A$ (although, for simplicity, for the rest of the paper we will associate $\tau(m_1, m_2)$ with the probability that the mediator suggests $0$). A communication protocol defines a game between the two players, where a (behavioral) strategy of player $i$  is a mapping $\sigma_i:\Omega\to\Delta(M_i)$. A communication protocol $(M_1,M_2,\tau)$ together with a  profile of strategies $\sigma=(\sigma_1,\sigma_2)$ induces a probability measure $\p_\sigma\in(A\times\Omega)$. 

We call a Nash equilibrium of the game induced by a communication protocol a \emph{cheap talk equilibrium}. 
A \emph{policy} is a mapping $p:\Omega\to\Delta(A)$. For simplicity we identify $p(\omega)$ with the probability of recommending action $a=0$ in state $\omega$. Say that a policy $p$ is \emph{cheap talk implementable} (or simply implementable) if there exists a communication
protocol $(M_1,M_2,\tau)$ and a corresponding cheap talk equilibrium $\sigma$ such that $\p_\sigma(a=0|\omega)=p(\omega)$ for every $\omega\in\Omega$.

Our first main goal is to characterize the set of implementable policies $p$. To do this, we define a binary relation $\prec$ over $\Omega$ as follows.
\commentout{
For two distinct $\omega,\omega'$ we have that $\omega'\prec \omega$  iff $(u_i(0,\omega)-u_i(1,\omega))(u_j(0,\omega')-u_j(1,\omega'))<0$ for $\{i,j\}=\{1,2\}.$ 
That is,  $\omega'\prec \omega$ holds if one of the senders strictly prefers action zero at $\omega$ and the other strictly prefers action $1$ at $\omega'$. 
Note that $\prec$ is not necessarily symmetric. To see this assume that both players prefer action $0$ to $1$ at $\omega$ and action $1$ to $0$ at $\omega'$. Thus we have that $\omega'\prec \omega$ but not vice versa.
}
For two states $\omega, \omega'$ we define the relation $\omega' \prec \omega$ iff $u_i(0, \omega) > u_i(1, \omega)$ and $u_j(0, \omega') < u_j(1, \omega')$ for $i,j \in \{1,2\}$ and $i \not = j$. That is  $\omega'\prec \omega$ holds if one of the senders strictly prefers action $0$ at $\omega$ and the other strictly prefers action $1$ at $\omega'$. 
Note that $\prec$ is not necessary symmetric. To see this assume that both players prefers action $0$ to $1$ at $\omega$ and action $1$ to $0$ at $\omega'$. Thus we have that $\omega'\prec \omega$ but not vice versa.

Our characterization goes as follows
\begin{theorem}\label{th:main}
$p:\Omega\to\Delta(A)$ is an implementable policy iff $p(\omega')\leq p(\omega)$ for every pair $\omega,\omega'\in\Omega$ such that $\omega'\prec \omega$. 
\end{theorem}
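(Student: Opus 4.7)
The plan is to prove both directions separately, using a direct-revelation approach for sufficiency.

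For necessity, let $(M_1,M_2,\tau)$ with equilibrium $\sigma=(\sigma_1,\sigma_2)$ implement $p$. For each pair of states $\omega_1,\omega_2\in\Omega$ I would introduce the auxiliary quantity
\[
P(\omega_1,\omega_2) := \sum_{m_1\in M_1,\,m_2\in M_2} \sigma_1(\omega_1)(m_1)\,\sigma_2(\omega_2)(m_2)\,\tau(m_1,m_2),
\]
the probability of action $0$ when sender~$1$ plays the equilibrium mixture $\sigma_1(\omega_1)$ and sender~$2$ plays $\sigma_2(\omega_2)$; in particular $p(\omega)=P(\omega,\omega)$. Now suppose $\omega'\prec\omega$ and, without loss of generality, that sender~$1$ strictly prefers action~$0$ at $\omega$ while sender~$2$ strictly prefers action~$1$ at $\omega'$. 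I then consider two single-state deviations. Sender~$1$ at state $\omega$ mimicking $\sigma_1(\omega')$: since he prefers $0$, the Nash condition gives $p(\omega)=P(\omega,\omega)\geq P(\omega',\omega)$. Sender~$2$ at state $\omega'$ mimicking $\sigma_2(\omega)$: since she prefers $1$, the Nash condition gives $p(\omega')=P(\omega',\omega')\leq P(\omega',\omega)$. Chaining the two inequalities yields $p(\omega')\leq p(\omega)$.

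For sufficiency, build a direct-revelation mediator with $M_1=M_2=\Omega$, set the diagonal values $\tau(\omega,\omega):=p(\omega)$, and choose the off-diagonal values $\tau(\omega_1,\omega_2)$ so that truthful reporting is an equilibrium. A single-state deviation by sender~$1$ at $\omega_2$ reporting $\omega_1$ forces $\tau(\omega_1,\omega_2)\leq p(\omega_2)$ if $u_1(0,\omega_2)>u_1(1,\omega_2)$, forces $\tau(\omega_1,\omega_2)\geq p(\omega_2)$ in the reverse case, and imposes no constraint under indifference; sender~$2$'s deviation at $\omega_1$ reporting $\omega_2$ yields the symmetric constraint relative to $p(\omega_1)$. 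A short case analysis on the four non-trivial sign combinations shows that the two constraints pull in opposite directions exactly when $\omega_1\prec\omega_2$ or $\omega_2\prec\omega_1$, and in each such case the hypothesis delivers the inequality between $p(\omega_1)$ and $p(\omega_2)$ needed to make the admissible interval for $\tau(\omega_1,\omega_2)$ nonempty; otherwise the constraints are trivially met by $\tau\in\{0,1\}$. Picking any admissible value completes the construction, and since $\tau(\omega,\omega)=p(\omega)$ the induced outcome is exactly $p$.

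Two remarks on the argument. First, restricting to deviations that report a single state is without loss: since each sender observes $\omega$ before sending a message, interim utility is linear in their mixed message, so a best pure report dominates any mixed report. Second, I expect the main obstacle to be bookkeeping rather than conceptual; the content of the theorem is that the binary relation $\prec$ exactly captures the pairs of states where sender~$1$'s and sender~$2$'s incentive constraints can be played off against each other, and the algebra of the sufficiency case analysis arranges itself so that $\prec$ falls out as the feasibility criterion.
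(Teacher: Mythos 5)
Your proposal is correct and follows essentially the same route as the paper's proof: necessity comes from chaining the two senders' incentive constraints through the ``off-diagonal'' outcome $P(\omega',\omega)$ (the paper's $\tau(\omega',\omega)$ after invoking the revelation principle), and sufficiency from a direct-revelation mediator with $\tau(\omega,\omega)=p(\omega)$ whose off-diagonal entries are chosen in the interval forced by the two unilateral-deviation constraints, which is nonempty precisely because of the $\prec$ hypothesis (the paper picks the midpoint $\tfrac{p(\omega)+p(\omega')}{2}$ in the conflicting case and $0$ or $1$ otherwise). The only cosmetic difference is that you handle necessity for arbitrary mechanisms directly via the interim quantities $P(\omega_1,\omega_2)$ rather than first reducing to a truthful direct mechanism, and your linearity remark justifying the restriction to pure single-state deviations is a valid (and slightly more careful) version of what the paper leaves implicit.
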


Surprisingly, the set of implementable policies is independent of the prior $\mu$.
We note that in the case where the two players are never indifferent between the outcomes in $A$, our characterization gets even simpler form. Specifically, the set $\Omega$ can be partitioned into four kinds of states: $\Omega_{1,1}\subseteq\Omega$, where both players prefer action $1$; $\Omega_{0,0}$, where both prefer action $0$; $\Omega_{1,0}$, where player $1$ prefers action $1$ and player 2 prefers action $0$; and $\Omega_{0,1}$, where player $1$ prefers action $0$ and player $2$ prefers action $1$. 

\begin{corollary}\label{corollary:no ties}
In cases where indifference never holds, a policy $p:\Omega\to[0,1]$ is implementable iff the following conditions hold:
\begin{enumerate}
    \item [(1)] The minimum of $p$ among all  states in $\Omega_{0,0}$ is (weakly) larger than the maximum of $p$ among all other states.
    \item [(2)] The maximum of $p$ among all states in $\Omega_{1,1}$ is (weakly) larger than the minimum of $p$ among all other states.
    \item [(3)] $p$ is constant on $\Omega_{1,0}$.
    \item [(4)] $p$ is constant on $\Omega_{0,1}$.
\end{enumerate}
\end{corollary}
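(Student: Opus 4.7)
The plan is to derive the corollary from Theorem~\ref{th:main} by enumerating, under the no-indifference assumption, which pairs $(\omega,\omega')$ satisfy $\omega'\prec\omega$. Since no player is ever indifferent, the preference profile at each state is fully determined by the partition class containing it, so there are only sixteen pair-types to examine. Two structural observations cut the work down substantially: if $\omega\in\Omega_{1,1}$ then no sender prefers action $0$ at $\omega$, so $\omega'\not\prec\omega$ for every $\omega'$; and if $\omega\in\Omega_{0,0}$ then both senders prefer $0$ at $\omega$, so $\omega'\prec\omega$ iff some sender prefers $1$ at $\omega'$, equivalently $\omega'\notin\Omega_{0,0}$.

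For the mixed classes, if $\omega\in\Omega_{0,1}$ only sender $1$ prefers $0$ at $\omega$, so $\omega'\prec\omega$ iff sender $2$ prefers $1$ at $\omega'$, i.e., $\omega'\in\Omega_{0,1}\cup\Omega_{1,1}$; symmetrically, $\omega\in\Omega_{1,0}$ gives $\omega'\in\Omega_{1,0}\cup\Omega_{1,1}$. I would then translate each family of constraints given by Theorem~\ref{th:main} into one of the four conditions: the $\omega\in\Omega_{0,0}$ family becomes (1); the pairs with $\omega,\omega'\in\Omega_{0,1}$, applied in both directions using the symmetry of $\prec$ when restricted to that set, force $p$ to be constant on $\Omega_{0,1}$, giving (4); the same reasoning on $\Omega_{1,0}$ yields (3); and the remaining instances, where $\omega'\in\Omega_{1,1}$ and $\omega\in\Omega_{0,1}\cup\Omega_{1,0}$, combined with (1), yield (2). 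For the converse, one reads off that (1)--(4) imply $p(\omega')\leq p(\omega)$ for every pair $\omega'\prec\omega$ in the enumerated list, so Theorem~\ref{th:main} supplies implementability.

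The main obstacle is really just staying organized through the sixteen-way case analysis; once the asymmetric roles of $\Omega_{0,0}$ (from which all inequalities radiate outward) and $\Omega_{1,1}$ (from which none do) are noted, everything else falls into place. The one subtle point worth highlighting is why constancy (conditions (3) and (4)) appears rather than a mere one-sided inequality: it is because $\prec$ is symmetric on $\Omega_{0,1}\times\Omega_{0,1}$ and on $\Omega_{1,0}\times\Omega_{1,0}$, whereas between different partition classes one only ever obtains a one-sided relation.
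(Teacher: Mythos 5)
Your derivation is correct and is essentially the intended argument: the paper states this corollary as an immediate consequence of Theorem~\ref{th:main} without a separate proof, and your case analysis of $\prec$ across the four preference classes (all inequalities radiating out of $\Omega_{0,0}$, none out of $\Omega_{1,1}$, and symmetry of $\prec$ within each mixed class forcing constancy) is exactly the right way to fill it in. One remark: what your argument correctly establishes for condition (2) is that the maximum of $p$ on $\Omega_{1,1}$ is weakly \emph{smaller} than the minimum of $p$ among all other states --- the word ``larger'' in the stated condition (2) is a typo (it should be the dual of condition (1), as the policy $p\equiv 1$ on $\Omega_{0,0}$, $p\equiv 0$ on $\Omega_{1,1}$ shows), so your proof matches the intended, true statement rather than the literal one.
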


\begin{proof}[Proof of Theorem~\ref{th:main}]
By the revelation principle, we can transform any cheap talk equilibrium into a \emph{truthful} cheap talk equilibrium, in which both players send the current state to the mediator. Thus, for simplicity, we can assume that $M_1 = M_2 = \Omega$, and that $\sigma_1 \equiv \sigma_2 \equiv Id_\Omega$. Suppose that player 1 prefers action $0$ in state $\omega$ and player 2 prefers action $1$ in state $\omega'$. Then, player $1$ cannot increase the probability that the receiver plays action $0$ by sending another state to the mediator, and player $2$ cannot decrease such probability. If we plot the values of $\tau(\cdot, \cdot)$ in an $n \times n$ matrix as in Figure~\ref{fig:matrix}, we get the following insight: $\tau(\omega, \omega)$ must be the maximum value in the $\omega$ column and, simultaneously, $\tau(\omega', \omega')$ must be the minimum value in the $\omega'$ row. Thus, it must hold that $\tau(\omega, \omega) \ge \tau(\omega', \omega) \ge \tau(\omega', \omega')$, and therefore that $p(\omega) \ge p(\omega')$. 

\begin{center}
    \begin{figure}[h!]
    \includegraphics[]{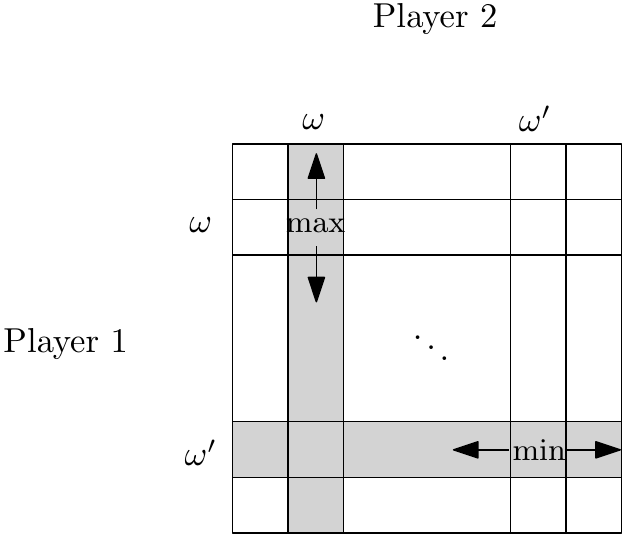}
    \caption{Matrix with the values of $\tau$}
    \end{figure}\label{fig:matrix}
\end{center}

Conversely, we can check that if $p(\omega) \le p(\omega')$ whenever $\omega \prec \omega'$, there is a way to construct $\tau$ such that $\tau(\omega, \omega) = p(\omega)$ for all $\omega \in \Omega$ and $(Id_\Omega, Id_\Omega)$ is a cheap talk equilibrium in $(\Omega, \Omega, \tau)$. The high-level idea is that the problem reduces to assigning a value to all entries in the matrix of Figure~\ref{fig:matrix} such that (a) $\tau(\omega, \omega)$ is always the greatest (resp., the smallest) entry in the $\omega$ column whenever player 1 prefers $0$ (resp., prefers 1), and (b) $\tau(\omega, \omega)$ is always the greatest (resp., the smallest) entry in the $\omega$ row whenever player 2 prefers $0$ (resp., prefers $1$). We can get an idea about how to fill this matrix by looking at Figure~\ref{fig:matrix}: if player 1 prefers $0$ in $\omega$, player 2 prefers $1$ in $\omega'$, and $p(\omega) \ge p(\omega')$, then we should simply assign a value to $\tau(\omega, \omega')$ such that $\tau(\omega, \omega) \ge \tau(\omega', \omega) \ge \tau(\omega', \omega')$, e.g., $\tau(\omega', \omega) := \frac{\tau(\omega, \omega) + \tau(\omega', \omega')}{2}$. The same value works if the preferences between player 1 and 2 are reversed. If both have the same preferences, we have that $\tau(\omega', \omega)$ should be either smaller or greater than both $\tau(\omega, \omega)$ and $\tau(\omega', \omega')$, which means that we can set $\tau(\omega', \omega) := 0$ or $\tau(\omega', \omega) := 1$ respectively. More precisely, let $\tau$ be such that 
\begin{enumerate}
    \item $\tau(\omega, \omega) = p(\omega)$ for all $\omega \in \Omega$.
    \item $\tau(\omega', \omega) = \frac{p(\omega)+p(\omega')}{2}$ if player 1 prefers 0 in $\omega$ and player 2 prefers $1$ in $\omega'$ or, vice-versa, if player 1 prefers 1 in $\omega'$ and player 2 prefers $0$ in $\omega$.
    \item $\tau(\omega', \omega) = 0$ if player $1$ prefers $0$ in $\omega$ and player 2 prefers $0$ in $\omega'$.
    \item $\tau(\omega', \omega) = 1$ if player $1$ prefers $1$ in $\omega$ and player 2 prefers $1$ in $\omega'$.
\end{enumerate}

If a player is indifferent between actions $1$ and $0$, we assume that she prefers $0$. Note that with this assumption, all possible cases are covered, and thus $\tau$ is defined for all pairs $(\omega', \omega)$. We next show that 
$(Id_\Omega, Id_\Omega)$ is a cheap talk equilibrium in $(\Omega, \Omega, \tau)$.

Suppose that the game state is $\omega$; we will show that neither player 1 nor player 2 can increase their utility by misreporting the state to the mediator. Suppose that player 1 prefers action $0$. If she were to lie and tell the mediator that the state is $\omega'$, there are two possibilities: if player 2 prefers action $1$ in $\omega'$, then $\omega' \prec \omega$ and thus, by construction (case 2), $\tau(\omega', \omega) = \frac{p(\omega)+p(\omega')}{2} \le \tau(\omega, \omega)$, where the last inequality derives from the fact that $\omega' \prec \omega \Longrightarrow p(\omega') \le p(\omega)$. This means that, in this case, player 1 cannot increase its utility by reporting $\omega'$ instead of $\omega$. If, instead, player 2 prefers $0$ in $\omega'$, we are in case 3 and thus $\tau(\omega', \omega) = 0 \le \tau(\omega, \omega)$ as before. If player 1 prefers $1$ in $\omega$, there are again two possibilities: if player 2 prefers $0$ in $\omega'$, then again $\tau(\omega', \omega) = \frac{p(\omega)+p(\omega')}{2}$, but in this case $\omega \prec \omega'$ and thus $p(\omega) \le p(\omega')$, which means that $\tau(\omega', \omega) \ge \tau(\omega, \omega)$ and, therefore, player 1 cannot increase its utility by reporting $\omega'$ instead of $\omega$. The remaining possibility is the one in which player 2 prefers action $1$ in $\omega'$. However, in this case we have that, by construction, $\tau(\omega', \omega) = 1 \ge \tau(\omega, \omega)$. An analogous argument shows that player 2 cannot increase her utility by misreporting (note that, by construction, $\tau$ is symmetric with respect to the preferences of player 1 and player 2).

\end{proof}
\section{Application to Information Design}
In this section we study the implication of
Theorem~\ref{th:main} 
for information design problems. In particular, in addition to the two informed players, who henceforth will be called senders, we have an uninformed receiver with a utility function $v:A\times\Omega\to\real$. In this setting,
the mediator does not play the action immediately, but instead suggests it to the receiver. The receiver plays the action if it is incentive-compatible to do so. More precisely, given protocol $(M_1, M_2, \tau)$ with strategy profile $(\sigma_1, \sigma_2)$, the mediator plays the action suggested by the mediator if and only if it gives the receiver a better expected utility than playing any other action. This is formalized in the definition below.

\begin{definition}
Given communication protocol 
$(M_1,M_2,\tau)$, a pair of behavioral strategies $\sigma=(\sigma_1,\sigma_2)$ induces a cheap talk equilibrium if,
for all $i \in \{1, 2\}$, the strategy $\sigma_i$ maximizes the utility of sender $i$ given $\tau$ and $\sigma_{-i}$, and, in addition, $\p_\sigma\in\Delta(A\times\Omega)$ induces an incentive-compatible recommendation for the receiver.
\end{definition}
Note that $\p_\sigma\in\Delta(A\times\Omega)$ induces an incentive-compatible recommendation for the receiver if and only if the following inequality holds:
$$\sum_{\omega\in\Omega}\p_\sigma(\omega|a)v(a,\omega)\geq \sum_{\omega\in\Omega}\p_\sigma(\omega|a)v(1-a,\omega).$$
That is, $\tau$ and $\sigma$ generate an action recommendation for the receiver. As is standard in the literature, such a recommendation is incentive compatible if, conditional on the recommendation on an action $a\in A$, the receiver is better off accepting the recommendation than playing action $1-a$. 
,
With this, we can define an \emph{implementable policy}.

\begin{definition}
A policy $p:A\to[0,1]$ is implementable if there exists a communication protocol 
$(M_1,M_2,\tau)$ and a cheap talk equilibrium $\sigma$
such that $\p_\sigma(a=0|\omega)=p(\omega)$ for every $\omega\in\Omega$. 
\end{definition}

Intuitively, $p$ is implementable if there exists a communication protocol and a cheap talk equilibrium that induce $p$.

\subsection{Common Interest among Senders}

One of the goals of this work is to characterize implementable policies.
We first consider the case where the two senders have a common interest. That is, $u_1=u_2=u.$

\commentout{
A natural comparison in this case is the
Bayesian persuasion setting of \cite{kamenica2011bayesian}.
In this setting, there is only one sender, but that sender has commitment power.

We recall that the prior probability on $\Omega$ is $\mu\in\Delta(\Omega)$. We let $\beta$ be the utility that the receiver can guarantee when he has no further information over $\Omega$. That is,
$\beta=\max_{a\in A}\sum_{\omega\in\Omega}\mu(\omega)v(a,\omega).$

We call a policy $p:A\to[0,1]$ implementable if there exists a communication protocol 
$(M_1,M_2,\tau)$ and a cheap talk equilibrium $\sigma$
such that $\p_\sigma(a=0|\omega)=p(\omega)$ for every $\omega\in\Omega$. 
}

Let $\beta$ be the utility that the receiver can guarantee when she has no information about the current state (i.e., $\beta = \max(E_\mu(v(1, \omega)), E_\mu(v(0, \omega))$, let $\Omega_0=\Omega_{0,0}$ and $\Omega_1=\Omega_{1,1}$, and, for each policy $p$, let $q_p\in\Delta(A\times\Omega)$ be the corresponding distribution that is generated by $p$ and the prior $\mu$. The
following lemma gives a first characterization of the implementable policies whenever both senders have the same utilities.
\begin{lemma}\label{lem:char}
A policy $p:\Omega\to [0,1]$ is implementable iff
$E_{q_p}[v(a,\omega)]\geq\beta$ and 
$\min_{\omega\in\Omega_{0}}p(\omega)\geq \max_{\omega\in\Omega_{1}}p(\omega)$
\end{lemma}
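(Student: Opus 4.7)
The plan is to decompose implementability into two independent conditions, matching the two components of the cheap talk equilibrium definition in this section: the senders' incentive compatibility (inherited from Theorem~\ref{th:main}) and the receiver's incentive compatibility (the new ingredient). I will show that the first reduces to the structural inequality $\min_{\omega \in \Omega_0} p(\omega) \geq \max_{\omega \in \Omega_1} p(\omega)$, and that the second is algebraically equivalent to $E_{q_p}[v(a,\omega)] \geq \beta$.

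For the first reduction, I will specialize Theorem~\ref{th:main} to the common-interest case $u_1 = u_2 = u$. Under this assumption $\Omega_{0,1}$ and $\Omega_{1,0}$ are empty (resolving ties as preferring $0$, as in the proof of Theorem~\ref{th:main}), and the relation $\omega' \prec \omega$ holds exactly when $\omega \in \Omega_0$ and $\omega' \in \Omega_1$. Thus Theorem~\ref{th:main}'s characterization of sender-incentive-compatible policies becomes precisely $\min_{\omega \in \Omega_0} p(\omega) \geq \max_{\omega \in \Omega_1} p(\omega)$.

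For the second reduction, I will introduce $D(\omega) := v(0,\omega) - v(1,\omega)$ and unfold the receiver's IC condition. The IC for action $a=0$ reduces to $S \geq 0$ where $S := \sum_\omega \mu(\omega) p(\omega) D(\omega)$, and the IC for $a=1$ reduces to $S \geq T$ where $T := \sum_\omega \mu(\omega) D(\omega)$. A direct calculation gives $E_{q_p}[v(a,\omega)] = E_\mu[v(1,\omega)] + S$ and $\beta = E_\mu[v(1,\omega)] + \max(T, 0)$. Hence $E_{q_p}[v] \geq \beta$ is equivalent to $S \geq \max(T, 0)$, which splits exactly into $S \geq 0$ and $S \geq T$, i.e.\ into the two receiver IC inequalities. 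So the receiver's IC holds if and only if $E_{q_p}[v] \geq \beta$.

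Combining the two reductions yields the lemma. I don't expect a major obstacle: the first step is a direct transcription of Theorem~\ref{th:main} to common-interest utilities, and the second is a short algebraic identity showing that the conjunction of the two conditional IC inequalities collapses into a single unconditional comparison against $\beta$. The only care required is the handling of indifferent states, which is absorbed into the same tie-breaking convention used in the proof of Theorem~\ref{th:main}.
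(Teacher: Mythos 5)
Your proposal is correct and follows essentially the same route as the paper: both reduce the sender side to Theorem~\ref{th:main} (which in the common-interest case collapses to the min/max inequality) and then identify receiver incentive compatibility with the condition $E_{q_p}[v(a,\omega)]\geq\beta$. The only cosmetic difference is that you establish the receiver-IC equivalence as a direct algebraic identity ($E_{q_p}[v]=E_\mu[v(1,\omega)]+S$ versus $\beta=E_\mu[v(1,\omega)]+\max(T,0)$), whereas the paper argues necessity via ``information cannot harm the receiver'' and sufficiency by contradiction --- the same underlying computation packaged differently.
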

\begin{proof}
By Theorem~\ref{th:main}, the condition $\min_{\omega\in\Omega_0}p(\omega)\geq \max_{\omega\in\Omega_1}p(\omega)$  is necessary and sufficient for $p$ to be implementable by the senders.
Clearly, since information cannot harm the receiver, we must have $E_{q_p}[v(a,\omega)]\geq\beta$ for any implementable policy. We now show that the condition $E_{q_p}[v(a,\omega)]\geq\beta$ induces an incentive-compatible recommendation for the receiver. We note that
$$
\begin{array}{lll}
E_{q_p}[v(a,\omega)] & = & q_p(a=0)E_{q_p}[v(0,\omega)|a=0] \\
& + & q_p(a=1)E_{q_p}[v(1,\omega)|a=1].\\
\end{array}$$ If, by way of contradiction, the policy is not incentive compatible, then either $E_{q_p}[v(1,\omega)|a=0]>E_{q_p}[v(0,\omega)|a=0]$ or $E_{q_p}[v(0,\omega)|a=1]>E_{q_p}[v(1,\omega)|a=1]$ (or both). In any case, this means that always playing 0 or always playing 1 yields an expected payoff which is higher than $\beta$. This contradicts the definition of $\beta$.  
\end{proof}

This lemma shows that, for a policy to be implementable, (a) it has to satisfy that $0$ is always played with more probability whenever both senders prefer $0$ than when both senders prefer $1$, and (b) the expected utility of the receiver under this policy should always be better than when she receives no information at all. 

It is interesting to see how this setting compares with the one in which there is only one sender but with commitment power, as in \cite{kamenica2011bayesian}. It turns out that, in our setting, the set of implementable policies is more restrictive, and thus there are cases in which the optimal Bayesian persuasion mechanism with one sender with commitment power is not implementable, as the following example shows.

\begin{example}\label{example}

Consider a state space $\Omega=\{\omega_1,\omega_2,\omega_3,\omega_4,\omega_5,\omega_6\}$ with six states. The utility for the senders and the receiver is given in the following table: 
\begin{table}[H]
\begin{center}
\begin{tabular}{lcccccc}
                         & $\omega_1$                 & $\omega_2$                 & $\omega_3$                 & $\omega_4$                 & $\omega_5$                 & $\omega_6$                 \\ \cline{2-7} 
\multicolumn{1}{l|}{$v$} & \multicolumn{1}{c|}{$0,2$} & \multicolumn{1}{c|}{$1,0$} & \multicolumn{1}{c|}{$0,1.9$} & \multicolumn{1}{c|}{$0,2.1$} & \multicolumn{1}{c|}{$1,0$} & \multicolumn{1}{c|}{$2,0$} \\ \cline{2-7} 
\multicolumn{1}{l|}{$u$} & \multicolumn{1}{c|}{$0,1$} & \multicolumn{1}{c|}{$1,0$} & \multicolumn{1}{c|}{$1,0$} & \multicolumn{1}{c|}{$1,0$} & \multicolumn{1}{c|}{$0,1$} & \multicolumn{1}{c|}{$0,1$} \\ \cline{2-7} 
                         & \multicolumn{1}{l}{}       & \multicolumn{1}{l}{}       & \multicolumn{1}{l}{}       & \multicolumn{1}{l}{}       & \multicolumn{1}{l}{}       & \multicolumn{1}{l}{}      
\end{tabular}
\end{center}
\end{table}
For each state $\omega_i$, the left-hand number represents the utility from action $0$ and the right-hand number represents the utility from action $1$. The prior 
is the uniform 
distribution $\mu=(\frac{1}{6},\frac{1}{6},\frac{1}{6},\frac{1}{6},\frac{1}{6},\frac{1}{6})$. 
One can show that the optimal Bayesian persuasion policy generates the following conditional probability of action\footnote{The calculation is based on the standard  algorithm for computing the optimal persuasion policy for the sender in the case where the action set for the receiver is binary. See, e.g., \cite{arieli2019delegated} and \cite{renault2017optimal}.} $0$: $p=(0,1,1,0,0,0.45).$ Note that $\Omega_0=\{ \omega_2,\omega_3,\omega_4\}$ and $\Omega_1=\{ \omega_1,\omega_5,\omega_6\}$. Thus,  
$\min_{\omega\in\Omega_0}p(\omega)=0$ while $\max_{\omega\in\Omega_1}p(\omega)=0.45$.
Lemma \ref{lem:char} 
implies that $p$ is not implementable.
\end{example}

In the following section, we provide an efficient algorithm that finds the best equilibrium for the senders. Note that this is equivalent to finding the best implementable policy since we can always efficiently construct the protocol and the equilibrium that induces the resulting policy as in the proof of Theorem~\ref{th:main}.

\subsection{Best Sender Equilibrium}

\subsubsection{Common Interest among Senders}\label{sec:common-interest-algo}

In this section we show how to compute a best sender equilibrium when both senders have the same utilities. This will serve as a stepping-stone to the more general algorithm in Appendix~\ref{sec:general-case-sketch} that outputs the best equilibrium for the first sender, even in the case where senders don't have the same utilities. The algorithm provided runs in $O(n \log n)$ operations, where $n$ is the number of states in $\Omega$. Before we start, 
note that we can refine Lemma~\ref{lem:char} and get a better characterization of optimal implementable policies.

\begin{lemma}\label{lemma:beta-utility}
Let $p^s$ be the optimal policy for the senders (i.e., $p^s(\omega)=1$ for every $\omega\in\Omega_0$ and  $p^s(\omega)=0$ for every $\omega\in\Omega_1$), and let $\beta$ be the receiver's expected utility with no information. If $E_{q_{p^s}}[v(a,\omega)] \ge \beta$, then $p^s$ is the implementable policy that is optimal for the senders. Otherwise, all optimal implementable policies $p$ for the senders satisfy $E_{q_p}[v(a, \omega)] = \beta$.
\end{lemma}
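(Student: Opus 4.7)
The plan is to handle the two claims separately, relying entirely on the characterization in Lemma~\ref{lem:char} (namely, that $p$ is implementable iff $\min_{\omega\in\Omega_0}p(\omega)\ge\max_{\omega\in\Omega_1}p(\omega)$ and $E_{q_p}[v(a,\omega)]\ge\beta$). Since we are in the common-interest setting (and indifferent senders are treated as preferring $0$), every state lies in $\Omega_0$ or $\Omega_1$, so $p^s$ is well-defined, and the senders' expected utility is a linear function of the vector $(p(\omega))_{\omega\in\Omega}$ that is strictly improved by moving any coordinate toward its value under $p^s$.

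For the first statement, I would assume $E_{q_{p^s}}[v(a,\omega)]\ge\beta$ and simply observe that $\min_{\omega\in\Omega_0}p^s(\omega)=1\ge 0=\max_{\omega\in\Omega_1}p^s(\omega)$, so both conditions of Lemma~\ref{lem:char} hold, hence $p^s$ is implementable. Because $p^s$ maximizes the senders' utility pointwise in each state, it is optimal among all policies, so in particular among all implementable ones.

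For the second statement, I would argue by contradiction. Assume $E_{q_{p^s}}[v(a,\omega)]<\beta$, let $p$ be an optimal implementable policy, and suppose $E_{q_p}[v(a,\omega)]>\beta$ strictly. First, $p\neq p^s$, since otherwise $E_{q_p}[v(a,\omega)]=E_{q_{p^s}}[v(a,\omega)]<\beta$ would contradict implementability. Hence either some $\omega_0\in\Omega_0$ has $p(\omega_0)<1$ or some $\omega_1\in\Omega_1$ has $p(\omega_1)>0$. In the first sub-case, define $p'$ by $p'(\omega_0)=p(\omega_0)+\epsilon$ and $p'=p$ elsewhere; in the second, set $p'(\omega_1)=p(\omega_1)-\epsilon$ and $p'=p$ elsewhere. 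In either construction $\min_{\omega\in\Omega_0}p'(\omega)$ does not decrease and $\max_{\omega\in\Omega_1}p'(\omega)$ does not increase, so the monotonicity condition of Lemma~\ref{lem:char} is preserved, while continuity of $E_{q_{p'}}[v(a,\omega)]$ in $p'$ preserves the strict inequality $E_{q_{p'}}[v(a,\omega)]>\beta$ for all sufficiently small $\epsilon>0$. But the perturbation strictly increases the senders' expected utility (it moves $p$ toward $p^s$ in exactly one coordinate where the senders strictly prefer that direction), contradicting optimality of $p$.

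The main thing to be careful about is ensuring that the ordering constraint $\min_{\Omega_0}p\ge\max_{\Omega_1}p$ is not broken by the perturbation. The key point is that in both sub-cases the perturbation only moves $p$ further toward $p^s$, so the ordering constraint can only be relaxed, never tightened; the receiver's incentive constraint is then handled by the standard continuity argument on the strict inequality $E_{q_p}[v(a,\omega)]>\beta$.
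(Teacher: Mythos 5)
Your proposal is correct and takes essentially the same approach as the paper: the first claim is read off directly from Lemma~\ref{lem:char}, and the second is established by the same contradiction argument, perturbing a single coordinate of $p$ toward $p^s$ by a small $\epsilon$ and noting that the monotonicity condition is preserved while the receiver's strict inequality survives by continuity. The only difference is that you spell out slightly more explicitly why $p\neq p^s$ and why the ordering constraint cannot be violated, which the paper leaves implicit.
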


\begin{proof}
If $E_{q_{p^s}}[v(a,\omega)] \ge \beta$, then it is also implementable by Lemma~\ref{lem:char} and, by construction, it is also optimal for the senders. Otherwise, by Lemma~\ref{lem:char}, any optimal implementable policy $p$ must satisfy $E_{q_p}[v(a,\omega)] \geq \beta$. Suppose that  $E_{q_p}[v(a,\omega)] > \beta$. By assumption, this means that $p$ is not equal to $p^s$ and, therefore, there exists either $\omega\in\Omega_0$ such that $p(\omega)<1$ or $\omega\in\Omega_1$ such that $p(\omega)>0$. Consider the former case. We can define a new policy $p'$ by 
setting $p'(\omega)=p(\omega)+\delta$ for some small $\delta>0$, and 
setting
$p'(\omega')=p(\omega')$ for 
all
other $\omega'\neq\omega$. 
By choosing a value of $\delta$ that is small enough, the inequality $E_{q_{p'}}[v(a,\omega)]\geq\beta$ is still satisfied.
In addition, since $p$ is incentive-compatible for the senders, so is $p'$. Therefore, $p'$ is implementable and yields a higher payoff than $p$ to the senders,
which contradicts the assumption that $p$ is optimal.
A similar construction can be applied when $p(\omega)>0$ for some $\omega\in\Omega_1$.
Hence, 
$E_{q_p}[v(a,\omega)]=\beta$, as desired.
\end{proof}

This lemma shows that we can restrict our search to implementable policies that give exactly $\beta$ utility to the receiver (in addition to the policy in which the receiver always plays according to the senders' preferences). We divide the rest of this section into two parts. First, we focus on a simpler problem that will be needed for the final algorithm and, second, we use the solution to this problem as a primitive for the final algorithm. The following notation will be useful.

Let $\Omega_C$ be the set of states where the senders and the receiver agree on the identity of the optimal action. That is, $\omega\in\Omega_C$ iff $[u(1,\omega)-u(0,\omega)][v(1,\omega)-v(0,\omega)]\geq 0$. Let $\Omega_D=\Omega\setminus\Omega_C$ be the set of disagreement states. We distinguish between $\Omega_{D,0}=\Omega_D\cap\Omega_0$ and $\Omega_{D,1}=\Omega_D\cap\Omega_1$. Let $\Omega_{C,0}$ and $\Omega_{C,0}$ be similarly defined.

Because of Theorem~\ref{th:main}, we know that any solution must satisfy that, for any $\omega \in \Omega_0$, $p(\omega)$ is greater than all $p(\omega')$ for $\omega' \in \Omega_1$. Given $\alpha \in [0,1]$, consider the problem of finding the best sender equilibrium that is constrained to $p(\omega) \ge \alpha$ for all $\omega \in \Omega_0$ and $p(\omega) \le \alpha$ for $\omega \in \Omega_1$. Let $p_\alpha$ denote the solution of this problem for a particular $\alpha$. By the above property, there exists an alpha such that $p_\alpha$ is the actual best  equilibrium for the senders (with no constraints).

Next, we show how to compute $p_\alpha$. Clearly, in the states $\omega$ in which the senders and the receiver have the same action  preference $a$, the mechanism should satisfy $p(\omega) = 1-a$ (i.e., $p(\omega) = 1$ whenever they all prefer 0 and $p(\omega) = 0$ whenever they all prefer 1). The main difficulty is finding the correct configuration for the disagreement states. Suppose that $\Omega_D = \{\omega_1, \ldots, \omega_k\}$ is sorted according to its states' resistance $r(\omega) := \frac{v(1,\omega) - v(0, \omega)}{u(0,\omega) - u(1, \omega)}$ from largest to smallest (note that these values are always positive). Consider the following algorithm that computes $p_\alpha$ (whenever it exists):

\begin{enumerate}
    \item \textbf{Step 1:} To each state $\omega \in \Omega$, assign $p(\omega) = 1$ if $\omega \in \Omega_0$ and assign $p(\omega) = 0$ otherwise. If the receiver's utility this way is larger than $\beta$, return this configuration and terminate.
    \item \textbf{Step 2:} Iterate through $\omega_1, \ldots, \omega_k$. If $p(\omega_i) = 1$, decrease this value until either $p(\omega_i) = \alpha$ or the receiver's utility equals $\beta$. If $p(\omega_i) = 0$, increase this value until either $p(\omega_i) = \alpha$ or the receiver's utility equals $\beta$. After each step, if the utility of the receiver is equal to $\beta$, return this configuration and terminate.
    \item \textbf{Step 3:} If no solution was found in Step 2 (which can only happen if, after iterating through all $k$ elements, the utility of the sender never reached $\beta$), there is no incentive-compatible configuration for $\alpha$.
\end{enumerate}


We claim that this protocol returns $p_\alpha$ if it exists. In fact, given the configuration at the end of Step 1, note that decreasing $p(\omega)$ by $\Delta$ for $\omega \in \Omega_{D,0}$ or increasing $p(\omega)$ by $\Delta$ for $\omega \in \Omega_{D,1}$ increases the receiver's utility by $\Delta |v(1, \omega) - v(0, \omega)|$ while it decreases the senders' utilities by $\Delta |u(1, \omega), u(0, \omega)|$. By Lemma~\ref{lem:char}, $p$ is implementable if and only if the receiver gets a utility of at least $\beta$. Therefore, in order for the receiver to get utility $\beta$ while minimizing the losses for the senders, it is optimal to decrease/increase as much as possible the values of $p$ with the highest resistance, as in Step 2. This shows that the algorithm above returns the correct solution.

Now that we are able to compute $p_\alpha$, it remains to find for what value of $\alpha$ the policy $p_\alpha$ maximizes the senders' utilities. We claim that we can restrict the search to values of $\alpha$ such that $p_\alpha$ attains values only in $\{0, \alpha, 1\}$.  

\begin{lemma}
There exists $\alpha \in [0,1]$ such that $p_\alpha$ is optimal for the senders and $p_\alpha(\omega) \in \{0, \alpha, 1\}$ for all $\omega \in \Omega_D$.
\end{lemma}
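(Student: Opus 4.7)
My plan is to formulate the sender-optimal implementable policy problem as a linear program whose variables include both the policy values $\{p(\omega)\}_{\omega\in\Omega_D}$ on the disagreement states and the threshold parameter $\alpha\in[0,1]$, and then to exhibit the desired policy as a vertex of the feasible polytope.

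First I would set up the LP. Combining Theorem~\ref{th:main} with the fact that, at any sender-optimum, $p(\omega)=1$ for $\omega\in\Omega_{C,0}$ and $p(\omega)=0$ for $\omega\in\Omega_{C,1}$, the set of implementable policies is characterized by the existence of a threshold $\alpha\in[0,1]$ with $\alpha\le p(\omega)\le 1$ for every $\omega\in\Omega_{D,0}$ and $0\le p(\omega)\le\alpha$ for every $\omega\in\Omega_{D,1}$. Together with the receiver constraint $E_{q_p}[v(a,\omega)]\ge\beta$ from Lemma~\ref{lem:char} and the linear sender objective $E_{q_p}[u(a,\omega)]$, this yields a linear program in $|\Omega_D|+1$ variables.

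Second I would analyze the vertices of the feasible polytope. A vertex must satisfy $|\Omega_D|+1$ linearly independent tight constraints selected from $\{p(\omega)=0,\ p(\omega)=1,\ p(\omega)=\alpha,\ \alpha=0,\ \alpha=1,\ E_{q_p}[v(a,\omega)]=\beta\}$. The key observation is that the $p$-bounds of the form $p(\omega)=0$ or $p(\omega)=1$, together with the receiver constraint, all have directions lying entirely in the $p$-coordinates and do not pin down $\alpha$. Hence, for the vertex to have full dimension, either $\alpha\in\{0,1\}$ (in which case every $p(\omega)$ is forced into $\{0,1\}\subseteq\{0,\alpha,1\}$ by the remaining bounds), or at least one tight bound must take the coupling form $p(\omega)=\alpha$, and a direct dimension count then forces every remaining $p(\omega)$ to lie at one of $\{0,1,\alpha\}$.

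Third, since the sender objective is linear and the feasible polytope is compact and nonempty (e.g., the policy that always recommends the receiver's default action is feasible), the LP attains its maximum at a vertex $(\tilde p,\tilde\alpha)$. By the previous step, $\tilde p(\omega)\in\{0,\tilde\alpha,1\}$ for every $\omega\in\Omega_D$. Moreover, $\tilde p$ must coincide with $p_{\tilde\alpha}$, since $p_{\tilde\alpha}$ is by definition sender-optimal subject to threshold $\tilde\alpha$, so any discrepancy would let $(p_{\tilde\alpha},\tilde\alpha)$ strictly dominate in the LP, contradicting optimality. Taking $\alpha=\tilde\alpha$ completes the proof. The main obstacle will be the dimension count in the second step: constraints of the form $p(\omega)=\alpha$ couple two coordinates, so standard LP vertex intuition (one tight bound per variable) must be replaced by a careful case analysis on how many $p(\omega)=\alpha$ bounds are active, depending on whether $\alpha$ is interior or at a boundary of $[0,1]$.
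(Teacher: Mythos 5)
Your route is genuinely different from the paper's: you linearize the threshold condition of Lemma~\ref{lem:char} by adjoining $\alpha$ as an extra LP variable and reason about vertices of the joint polytope in $\mathbb{R}^{|\Omega_D|+1}$, whereas the paper keeps $\alpha$ as an external parameter, uses the greedy structure of $p_\alpha$ (every state before the pivot sits at $\alpha$, every state after sits at its extreme, and at most one pivot state is determined by the binding constraint $E_{q_p}[v(a,\omega)]=\beta$) to show that the senders' value of $p_\alpha$ is piecewise linear in the single variable $\alpha$, and then takes the maximum at a breakpoint. The problem is that your dimension count fails exactly where you flag the difficulty. When $0<\alpha<1$ the count works: each $p(\omega)$ can contribute at most one tight bound, so assembling $|\Omega_D|+1$ independent tight constraints forces every $p(\omega)$ into $\{0,\alpha,1\}$. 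But the branch ``$\alpha\in\{0,1\}$, in which case every $p(\omega)$ is forced into $\{0,1\}$ by the remaining bounds'' is false: with $\alpha=0$ tight you need only $|\Omega_D|$ further independent constraints in the $p$-coordinates, and one of them may be the tight receiver constraint $E_{q_p}[v(a,\omega)]=\beta$ rather than a bound, leaving exactly one coordinate pinned at a generically fractional value.

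This is not a patchable technicality, because the sender-optimum really is attained at such a vertex in the paper's own Example~\ref{example}: the optimal policy is $p_0=(0,1,\tfrac{10}{19},0,0,0)$, so $\alpha=0$ and $p_0(\omega_3)=\tfrac{10}{19}\notin\{0,\alpha,1\}$, and no other value of $\alpha$ attains the optimal sender payoff $\tfrac{43}{57}$. The statement can therefore only be proved in the weaker form ``the optimum is attained either at some $\alpha$ with $p_\alpha(\omega)\in\{0,\alpha,1\}$ for all $\omega\in\Omega_D$, or at $\alpha\in\{0,1\}$ with at most one coordinate determined by the binding receiver constraint''---which is exactly why the final algorithm computes $p_0$ and $p_1$ separately in its Step~2. (The paper's own proof shares this blind spot: it identifies the breakpoints of the piecewise-linear value function with the values of $\alpha$ at which $p_\alpha$ is $\{0,\alpha,1\}$-valued, forgetting the domain endpoints $\alpha=0,1$.) If you rework your argument, keep the LP formulation but prove the corrected conclusion, treating the two boundary vertices explicitly; your interior-$\alpha$ analysis can stand as is.
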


\begin{proof}
Suppose that there exists a solution $p_\alpha$ such that $j$ is the first index such that $p_\alpha(\omega_j) \not \in \{0, \alpha, 1\}$. By construction, the solution provided by the algorithm satisfies that $p_\alpha(\omega_i) = \alpha$ for all $i < j$, and for $i > j$ $p_\alpha(\omega_i)$ is $1$ or $0$ depending on whether $\omega_i \in \Omega_0$ or $\Omega_1$, respectively. Therefore, we can compute the value of $p_\alpha(\omega_j)$ by solving the following equation: $$
\begin{array}{lll}
\beta & = & \sum_{i = 1}^{j-1} \mu(\omega_i)(\alpha v(0, \omega_i) \\
& + & (1-\alpha) v(1, \omega_i)) \\
& + & p_\alpha(\omega_j)\mu(\omega_j)v(0, \omega_j)\\ 
& + & \sum_{\omega \in \Omega_{j,0}} \mu(\omega)v(0, \omega) \\
& + &\sum_{\omega \in \Omega_{j,1}} \mu(\omega)v(1, \omega),
\end{array}
$$ where $\Omega_{j,0} := (\Omega \setminus \{\omega_1, \ldots, \omega_j\}) \cap \Omega_0$ and $\Omega_{j,1} := (\Omega \setminus \{\omega_1, \ldots, \omega_j\}) \cap \Omega_1$. Note that this means that $p_\alpha(\omega_j)$ is locally linear in $\alpha$, and thus that the senders' expected utility given by $p_\alpha$ is also locally linear in $\alpha$ whenever $p_\alpha(\omega_j)$ attains values outside of $\{0, \alpha, 1\}$. Therefore, the senders' expected utility given by $p_\alpha$ is piecewise linear as a function of $\alpha$, and hence its maximum lies in one of the segment's endpoints 
(i.e., in one of the values of $\alpha$ such that $p_\alpha(\omega_j) \in \{0,1,\alpha\}$ for all $\omega \in \Omega_D$). 
\end{proof}

The algorithm we provide to find the best equilibrium for the senders involves checking the senders' utilities at each of these endpoints. Note that a segment's endpoint is precisely a value of $\alpha$ such that the solution found by the above algorithm attains values only in $\{0, \alpha, 1\}$. By construction, the only possible preimages of $\alpha$ in $p_\alpha$ are $S_1 := \{\omega_1\}$, $S_2 := \{\omega_1, \omega_2\}$, $\ldots$, $S_k := \{\omega_1, \ldots, \omega_k\}$. Moreover, for each of these sets $S_j$, there is at most one value of $\alpha_j \in [0,1]$ such that $p_{\alpha_j}^{-1}(\alpha_j) = S_j$, and $\alpha_j$ is given by  $$\beta = \sum_{i = 1}^{j} \mu(\omega_i)(\alpha_j v(0, \omega_i) + (1-\alpha_j) v(1, \omega_i)) + $$ $$ + \sum_{\omega \in \Omega_{j,0}} \mu(\omega)v(0, \omega)  + \sum_{\omega \in \Omega_{j,1}} \mu(\omega)v(1, \omega).$$ Isolating $\alpha_j$ from the equation we get 

$$\alpha_j = \frac{Z_j}{\sum_{i = 1}^j\mu(\omega_i)(v(0, \omega_i) - v(1, \omega_i))},$$

where $$
\begin{array}{lll}
Z_j & = & \beta - \sum_{i = 1}^j\mu(\omega_i)v(1, \omega_i) \\
& - & \sum_{\omega \in \Omega_{j,0}} \mu(\omega)v(0, \omega) \\
& - & \sum_{\omega \in \Omega_{j,1}} \mu(\omega)v(1, \omega)
\end{array}$$


Putting everything together, our algorithm is as follows:

\begin{enumerate}
    \item \textbf{Step 1:} Compute the best possible mechanism for the senders (i.e., set $p(\omega) = 1$ for $\omega \in \Omega_0$ and $p(\omega) = 0$ for $\omega \in \Omega_1$). If this mechanism gives the receiver a utility greater than or equal to $\beta$, return this configuration and terminate.
    \item \textbf{Step 2:} Compute $p_0$ and $p_1$ and set the best configuration $p$ to be the one between $p_0$ and $p_1$ that reports the most utility to the senders.
    \item \textbf{Step 3:} For $j = 1,2,\ldots,k$, compute $\alpha_j$. If $\alpha_j \in [0,1]$ and $p_{\alpha_j}$ is better for the senders than $p$, set $p$ to $p_{\alpha_j}$.
    Return $p$.
\end{enumerate}

Note that if $\Omega_D$ is sorted by resistance beforehand, this algorithm takes $O(n)$ operations to compute $p$ since all the partial sums used to compute $\alpha_j$ and the senders' utilities can either be precalculated in $O(n)$ operations, or they can simply be updated by adding one term to each sum at each iteration $j$. If $\Omega_D$ is not sorted, the algorithm takes $O(n \log n)$ operations since it needs to sort the states by resistance first.

We can now apply the algorithm to Example \ref{example} above and get that the optimal policy for the sender is $p=(0,1,\frac{10}{19},0,0,0)$. Interestingly, this policy yields a utility of $\frac{43}{57}\approx 0.754$ for the senders, as opposed to  $\frac{91}{120}\approx 0.758$ in the single sender Bayesian persuasion (with  commitment power). In contrast, a single sender's cheap talk equilibrium (with no commitment power) yields a utility of $\frac{1}{2}$ for the sender. An interesting follow-up question is what is the maximal (normalized) loss of the best cheap talk equilibrium over the Bayesian persuasion optimal revelation policy.  

\subsubsection{General Case}\label{sec:general-case-sketch}

In this section we sketch the construction of an $O(n \log n)$ algorithm that outputs the optimal equilibrium for the first sender in the general case, in which senders may not have common interests. The full construction can be found at Appendix~\ref{sec:general-algo}.

Given $\alpha, \gamma \in [0,1]$, consider the problem of finding the optimal policy $p_{\alpha, \gamma}$ for the first sender such that $p_{\alpha, \gamma}(\omega) = \alpha$ for all $\omega \in \Omega_{1,0}$ and $p_{\alpha, \gamma}(\omega) = \gamma$ for all $\omega \in \Omega_{0,1}$. By Theorem~\ref{th:main}, the policy $p_{\alpha, \gamma}$ that gives the most utility to the first sender is also the actual implementable policy that is optimal for the first sender (with no constraints). It is straightforward to check that $p_{\alpha, \gamma}$ can be computed with a slight variation of the algorithm that outputs $p_\alpha$ in the common interest case. Thus, it remains to check which values of $\alpha$ and $\gamma$ maximize the utility of the first sender.

A similar argument to the one used in the previous section
shows that there is at least one optimal policy in which $p_{\alpha, \gamma}(\omega) \in \{0, \alpha, \gamma, 1\}$ for all $\omega \in \Omega$. Unfortunately, by contrast to the common interest case, this still leaves us with an infinite number of possibilities for $\alpha$ and $\gamma$. In fact, it can be shown that, if we assume that $\alpha \le \gamma$, the possible solutions can be partitioned into several cases (at most $n$) in which a linear equation on $\alpha$ and $\gamma$ must be satisfied. Since the expected utility of the first sender is also linear in $\alpha$ and $\gamma$, for each of these cases there always exists an optimal solution in the boundary, which is when $\alpha = \gamma$ or $(\alpha, \gamma) \in \{(0,0), (0,1), (1,1)\}$. Considering also the cases in which $\alpha \ge \gamma$ gives us the additional solution where $\alpha = 1$ and $\gamma = 0$. This additional constraint allows us to reduce the number of possible optimal solutions to a finite number which is linear in $n$. Each of these solutions can be computed in constant time in the same fashion as in the common interest algorithm. 

\subsection{Best Receiver Equilibrium}
Most literature on Bayesian persuasion focuses on the best sender equilibrium. This is because the informed sender can also decide on how information is revealed. In our case it make sense to assume that the information designer that determines the equilibrium selection has the same incentive as the receiver. As we shall now show, determining the best receiver equilibrium is easy. 

Here we take the general approach where the preferences of the two senders may not be aligned. For simplicity, we consider the case where no sender is indifferent between the two actions in any state. In this case, we can use Corollary \ref{corollary:no ties} to determine the optimal policy. 
We call a policy $p$ \emph{pure} if $p:\Omega\to\Delta(A)$ is a Dirac measure on either $0$ or $1$. Let $\Omega_F\subseteq\Omega$ be the set of states where all three decision makers have the same preference.
Let $\mathcal{P}$ be the set of all pure policies such that (i) $p(\omega)$ recommends the commonly preferred action for every $\omega\in\Omega_F$ and (ii) every $p\in\mathcal{P}$ is constant across types of states for all four different types of states in $\Omega\setminus{\Omega_F}$ (recall Corollary \ref{corollary:no ties}). That is, $p(\omega)=p(\omega')$ for every $\omega,\omega'\in\Omega_{0,0}\setminus\Omega_F$, $p(\omega)=p(\omega')$ for every $\omega,\omega'\in\Omega_{1,0}$, etc. 

We claim that by the feasibility constraint $\mathcal{P}$ contains $6$ policies. To see this, note that since any $p\in\mathcal{P}$ is pure and  is fixed over states in $\Omega_F$, it can be described by a vector in $\{0,1\}^4$ according to its values in the four types of states: the first value represents its value in $\Omega_{0,0}$, the second represents its value in $\Omega_{1,0}$, the third represents its value in $\Omega_{0,1}$, and the fourth represents its value in $\Omega_{1,1}$. The feasibility constraint asserts that the first value must be the global maximum across the four values and the last value must be the global minimum. We note that the policy $(1,1,1,1)$  dominates the policy $(1,1,1,0)$ for the receiver. This is true since, by definition, in $\Omega_{1,1}^0$ the receiver is in disagreement with both senders and prefers action $0$ in all these states. Similarly, $(0,0,0,0)$ dominates $(1,0,0,0)$. We denote the remaining four policies, after omitting the two dominated policies by $\mathcal{P}^*$   

We have the following simple corollary to determine the optimal policy for the receiver.
\begin{lemma}
There exists an optimal policy for the receiver that lies in $\mathcal{P}^*$.
\end{lemma}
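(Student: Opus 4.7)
My plan is to reduce the receiver's optimization over the (infinite) set of all implementable policies to a pointwise optimization plus an optimization over a finite set of parameters, and then show that the vertices that arise are exactly the elements of $\mathcal{P}^*$. By Corollary~\ref{corollary:no ties}, any implementable $p$ is constant on $\Omega_{1,0}$ (with some value $\alpha\in[0,1]$) and constant on $\Omega_{0,1}$ (with some value $\gamma\in[0,1]$), and satisfies $\min_{\Omega_{0,0}} p \ge \max(\alpha,\gamma,\max_{\Omega_{1,1}} p)$ together with $\max_{\Omega_{1,1}} p \le \min(\alpha,\gamma,\min_{\Omega_{0,0}} p)$.

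First I would fix the values on $\Omega_F$. On $\Omega_{0,0}\cap\Omega_F$ the receiver agrees with both senders in preferring action $0$, so setting $p(\omega)=1$ is pointwise optimal; this never tightens any constraint, because such states only matter for the $\min$ on $\Omega_{0,0}$. Symmetrically, set $p(\omega)=0$ on $\Omega_{1,1}\cap\Omega_F$.

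Next I handle the disagreement parts $\Omega_{0,0}\setminus\Omega_F$ and $\Omega_{1,1}\setminus\Omega_F$. On $\Omega_{0,0}\setminus\Omega_F$ the receiver strictly prefers action $1$, so the receiver's utility is strictly decreasing in each $p(\omega)$, while feasibility forces $p(\omega)\ge \max(\alpha,\gamma,\max_{\Omega_{1,1}} p)$. Hence the receiver's optimum is to take this common lower bound as a single constant $M_0$ across $\Omega_{0,0}\setminus\Omega_F$. By an analogous argument, the receiver-optimal $p$ equals a constant $M_1$ on $\Omega_{1,1}\setminus\Omega_F$. Writing the two coupled extremality conditions $M_0=\max(\alpha,\gamma,M_1)$ and $M_1=\min(\alpha,\gamma,M_0)$, a short case split on the sign of $\alpha-\gamma$ yields $M_0=\max(\alpha,\gamma)$ and $M_1=\min(\alpha,\gamma)$.

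Finally, with these reductions the receiver's expected utility becomes an affine function of $(\alpha,\gamma)$ on each of the two triangles $T_1=\{0\le\alpha\le\gamma\le1\}$ and $T_2=\{0\le\gamma\le\alpha\le1\}$, because on $T_1$ we have $M_0=\gamma$ and $M_1=\alpha$ (and symmetrically on $T_2$). An affine function on a triangle attains its maximum at a vertex, so it suffices to check the four vertices $(\alpha,\gamma)\in\{(0,0),(0,1),(1,1),(1,0)\}$. Translating back to the four-tuple $(a,b,c,d)$ encoding the values on $\Omega_{0,0},\Omega_{1,0},\Omega_{0,1},\Omega_{1,1}$ (via $a=M_0$, $b=\alpha$, $c=\gamma$, $d=M_1$), these four vertices map respectively to $(0,0,0,0)$, $(1,0,1,0)$, $(1,1,1,1)$, $(1,1,0,0)$, which are precisely the four elements of $\mathcal{P}^*$. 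The main obstacle is the bookkeeping in the coupled case analysis for $M_0$ and $M_1$, ensuring that the purported values are simultaneously feasible and pointwise optimal for the receiver; everything else is a routine linear-programming-on-a-polytope argument.
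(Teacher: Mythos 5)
Your proof is correct, and it reaches the same conclusion by a noticeably more explicit route than the paper. The paper's argument is abstract: it observes that the implementable policies form a polytope whose vertices are pure, that the receiver's utility is linear so the optimum is at a pure vertex, fixes the consensus action on $\Omega_F$, and then uses two domination arguments (via $(0,0,0,0)$ and $(1,1,1,1)$) to rule out pure policies that are non-constant on $\Omega_{0,0}\setminus\Omega_F$ or $\Omega_{1,1}\setminus\Omega_F$; the assertion that the vertices are pure is left unproved there. You instead parametrize directly by $(\alpha,\gamma)$ using Corollary~\ref{corollary:no ties}, argue pointwise that the receiver-optimal feasible values on the disagreement parts of $\Omega_{0,0}$ and $\Omega_{1,1}$ are the coupled extremal bounds $M_0=\max(\alpha,\gamma)$ and $M_1=\min(\alpha,\gamma)$, and then solve a two-dimensional linear program over the triangles $\{\alpha\le\gamma\}$ and $\{\gamma\le\alpha\}$, whose four vertices map exactly onto $\mathcal{P}^*=\{(0,0,0,0),(1,0,1,0),(1,1,1,1),(1,1,0,0)\}$. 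What your version buys is that the extreme-point structure is derived rather than asserted, and the two dominated policies $(1,0,0,0)$ and $(1,1,1,0)$ never arise, so no separate pruning step is needed; what the paper's version buys is brevity and no case analysis on the coupled constraints. Two minor points you should still note in a polished write-up: the degenerate cases where $\Omega_{1,0}$ or $\Omega_{0,1}$ (or $\Omega_{0,0}\setminus\Omega_F$, etc.) is empty, in which $\alpha$ or $\gamma$ becomes a free parameter and the reduction to the same four vertices still goes through; and, as in the paper, the verification that the receiver-optimal policy over the sender-incentive-compatible polytope is also incentive compatible for the receiver, which follows since it yields at least $\beta$ (the constant policies are feasible) and the argument of Lemma~\ref{lem:char} applies.
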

\begin{proof}
Consider first the set of all implementable policies for the sender. This set is a convex polytope in $\mathbb{R}^{\Omega}$.
The vertices of this polytope are pure policies. In addition, the utility of the receiver is linear over the polytope and so the maximum implementable policy for the receiver is attained as a pure policy. Moreover, we can assume that in $\Omega_F$ the recommended action is the consensus action. This is true since taking any policy and altering it in $\Omega_F$ according to the consensual action retains feasibility and increases the utility of the receiver. 

To complete the proof we only need to show why a policy $p$ that has both values $0$ and $1$ over states in $\Omega_{0,0}\setminus\Omega_F$ or two values over states in $\Omega_{1,1}\setminus\Omega_F$ can be improved by a policy in $\mathcal{P}$. To see this, note that if policy $p$ has both values $0$ and $1$  over states in $\Omega_{0,0}\setminus\Omega_F$, then by the feasibility constraints we must have that $p$ is $0$ across all states in $\Omega\setminus\Omega_F$ that are  not in $\Omega_{0,0}$. Since, by definition, the receiver prefers action $1$ in all states in $\Omega_{0,0} \setminus \Omega_F$, the policy $(0,0,0,0)$ in $\mathcal{P}^*$ dominates $p$. A similar consideration shows that $(1,1,1,1)$ dominates all implementable policies that attain two values over states in $\Omega_{1,1}\setminus\Omega_F$.      
\end{proof}

\section{Conclusion and Open Problems}
In this work, we characterized all incentive-compatible policies in a setting with two senders and one receiver with no commitment power, in which all agents can communicate through a trusted mediator. This characterization is also valid in the cheap talk setting, where there is no mediator and all agents can communicate with each other through private authenticated channels. However, in the cheap talk setting, the implementable policies in general allow a small probability of error. We also provided an $O(n \log n)$ algorithm (where $n$ is the number of states) that finds the optimal policy for each of the senders, and a very simple mechanism that is optimal for the receiver. 

Our results show that when there are two senders the equilibrium outcomes are much richer and are closer to those of classical Bayesian persuasion but without the commitment power assumption. A natural question to ask is whether our results can be extended to a more general setting. In particular, it is still open whether one can find a similar characterization in the following settings:

\begin{itemize}
    \item [(a)] A setting in which the receiver can play more than two actions.
    \item [(b)] A setting in which there are more than two senders, but each of them possesses only partial information about the state. Note that, if there are more than two senders and all of them are fully informed, then any policy $p$ can be implemented by the mediator. This can be done by setting the set of messages equal to the set of states $\Omega$. In this way, the mediator can fully deduce the state $\omega$ by taking the majority between the messages sent by the senders and then sampling an action from $p(\omega)$.
    Thus, for more than two senders the setting is interesting only if they are not fully aware of the state. 
    \item [(c)] A setting in which there are more than two senders, but up to $k$ of them can collude and deviate from the proposed strategy in a coordinated way.
\end{itemize}

\bibliography{bibfile}

\appendix

\section{Best Sender Equilibrium - General Case}\label{sec:general-algo}

In this section we provide an $O(n \log n)$ algorithm that outputs the best equilibrium for the first sender in the general case. Let $\Omega_{a,b}$ be the set of states in which sender 1 prefers action $a$ and sender 2 prefers action $b$. Additionally, we define by $\Omega_{a,b}^c$ the subset of $\Omega_{a,b}$ in which the receiver prefers action $c$. By Theorem~\ref{th:main}, we know that all implementable solutions $p$ must satisfy the following constraints:
\begin{itemize}
    \item If $\omega \in \Omega_{0,0}$ and $\omega' \not \in \Omega_{0,0}$ then, $p(\omega) \ge p(\omega')$.
    \item If $\omega \in \Omega_{1,1}$ and $\omega' \not \in \Omega_{1,1}$ then, $p(\omega) \le p(\omega')$.
    \item If $\omega, \omega' \in \Omega_{1,0}$, then $p(\omega) = p(\omega')$.
    \item If $\omega, \omega' \in \Omega_{0,1}$, then $p(\omega) = p(\omega')$.
\end{itemize}

As in Section~\ref{sec:common-interest-algo}, consider the problem of computing the optimal implementable policy for the first sender $p_{\alpha, \gamma}$ in which $p_{\alpha, \gamma}(\omega) = \alpha$ for all $\omega \in \Omega_{1,0}$ and $p_{\alpha, \gamma}(\omega) = \gamma$ for all $\omega \in \Omega_{0,1}$. By Theorem~\ref{th:main}, the policy $p_{\alpha, \gamma}$ that maximizes the utility of the first sender is also the actual implementable policy that is optimal for the first sender (with no constraints). An argument analogous to the one applied for $p_{\alpha}$ in Section~\ref{sec:common-interest-algo} shows that, when it exists, $p_{\alpha, \gamma}$ can be computed using the following algorithm:

\begin{enumerate}
    \item \textbf{Step 1:} To each state $\omega \in \Omega$, assign $p_{\alpha, \gamma}(\omega) = 1$ if $\omega \in \Omega_{0,0}$; $p_{\alpha, \gamma}(\omega) = 0$ if $\omega \in \Omega_{1,1}$; $p_{\alpha, \gamma}(\omega) = \alpha$ if $\omega \in \Omega_{1,0}$; and $p_{\alpha, \gamma}(\omega) = \gamma$ if $\omega \in \Omega_{0,1}$. If the receiver's utility this way is greater than $\beta$, return this configuration and terminate.
    \item \textbf{Step 2:} Let $\Omega_D = \Omega_{0,0}^1 \cup \Omega_{1,1}^0$ be the set of states in which both senders have the same preferences but in which they disagree with the receiver. Sort the elements $\omega$ of $\Omega_D$ according to their resistance $r(\omega) := \frac{v(1,\omega) - v(0, \omega)}{u_1(0,\omega) - u_1(1, \omega)}$, and let $\{\omega_1, \ldots, \omega_k\}$ be the sorted elements of $\Omega_D$.
    \item \textbf{Step 3:} Iterate through $\omega_1, \ldots, \omega_k$. If $p_{\alpha, \gamma}(\omega_i) = 1$, decrease this value until either $p_{\alpha, \gamma} = \max(\alpha, \gamma)$ or the receiver's utility equals $\beta$. If $p_{\alpha, \gamma}(\omega_i) = 0$, increase this value until either $p_{\alpha, \gamma}(\omega_i) = \min(\alpha, \gamma)$ or the receiver's utility equals $\beta$. After each step, if the utility of the receiver is equal to $\beta$, return this configuration and terminate.
    \item \textbf{Step 4:} If no solution was found in Step 3 (i.e., if after iterating through all $k$ elements the utility of the sender never reached $\beta$), there is no incentive-compatible configuration for $\alpha$ and $\gamma$.
\end{enumerate}

Again, a similar argument to the one provided in Section~\ref{sec:common-interest-algo} shows that the first sender's utility with $p_{\alpha, \gamma}$ is piecewise linear in $\alpha$ and $\gamma$, and thus there exists an optimal implementable policy for the first sender in which all $p_{\alpha, \gamma}(\omega) \in \{0,\alpha, \gamma, 1\}$ for all $\omega \in \Omega$. We can bound the domain of $\alpha$ and $\gamma$ using the following lemma:

\begin{lemma}\label{lemma:char-general}
There exists an optimal implementable policy $p_{\alpha, \gamma}$ for the first sender in which $p_{\alpha, \gamma}(\omega) \in \{0,\alpha, \gamma, 1\}$ for all $\omega \in \Omega$ and one of the following conditions hold:
\begin{itemize}
    \item (a) $\min(\alpha, \gamma) = 0$.
    \item (b) $\max(\alpha, \gamma) = 1$.
    \item (c) $\alpha = \gamma$.
\end{itemize}
\end{lemma}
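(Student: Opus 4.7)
The plan is to mimic the endpoint argument from Section~\ref{sec:common-interest-algo} but now in two variables $(\alpha, \gamma)$. The paragraph preceding the lemma already establishes that the optimum can be chosen so that $p_{\alpha, \gamma}(\omega) \in \{0, \alpha, \gamma, 1\}$ for every $\omega \in \Omega$. By symmetry of the two regimes, it suffices to treat $\alpha \le \gamma$; the analogous analysis for $\alpha \ge \gamma$ produces the same list of boundary conditions with the roles of $\alpha$ and $\gamma$ swapped.

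Fix such an optimum $(\alpha^*, \gamma^*)$ and record the induced \emph{labeling} $C^*$ that assigns to each $\omega \in \Omega$ its value in $\{0, \alpha, \gamma, 1\}$. In the regime $\alpha \le \gamma$, Theorem~\ref{th:main} forces a rigid structure on admissible labelings: states in $\Omega_{0,0}$ must be assigned $\gamma$ or $1$; states in $\Omega_{1,1}$ must be assigned $0$ or $\alpha$; states in $\Omega_{1,0}$ receive $\alpha$; and states in $\Omega_{0,1}$ receive $\gamma$. With $C^*$ held fixed, each $p(\omega)$ is either a constant or one of $\alpha, \gamma$, so both the first sender's expected utility $U_1(\alpha, \gamma)$ and the receiver's expected utility $V(\alpha, \gamma)$ become affine functions of $(\alpha, \gamma) \in \real^2$.

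Now consider the set $\Pi$ of pairs $(\alpha, \gamma)$ for which applying $C^*$ yields an implementable policy. This is exactly the intersection of the triangle $T = \{(\alpha, \gamma) : 0 \le \alpha \le \gamma \le 1\}$ with the half-plane $\{V(\alpha, \gamma) \ge \beta\}$---a convex polygon. Since $U_1$ is linear on $\Pi$ and $(\alpha^*, \gamma^*) \in \Pi$ attains the global maximum of $U_1$ over all implementable policies with this labeling, some vertex of $\Pi$ attains that same value. Every vertex of $\Pi$ is obtained by making two of its defining constraints tight; since $V \ge \beta$ is the only constraint that is not an edge of $T$, at least one tight constraint at every vertex must be a triangle edge. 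Hence every vertex of $\Pi$ lies on $\partial T = \{\alpha = 0\} \cup \{\gamma = 1\} \cup \{\alpha = \gamma\}$, matching conditions (a), (b), and (c) respectively.

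The main obstacle is verifying the claim used implicitly in the third step: once the labeling $C^*$ is fixed, the implementability conditions of Theorem~\ref{th:main} really do collapse to the four inequalities $\alpha \ge 0$, $\gamma \le 1$, $\alpha \le \gamma$, and $V \ge \beta$, with no extra configuration-dependent constraints coming from the monotonicity relations between $\Omega_{0,0}$, $\Omega_{1,1}$, and the rest. Once this reduction is in hand, the vertex-of-polygon argument is immediate and closely parallels the piecewise-linear reasoning already used in the common-interest case.
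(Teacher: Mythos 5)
Your proposal is correct and follows essentially the same route as the paper: fix the combinatorial assignment of states to values in $\{0,\alpha,\gamma,1\}$, observe that the first sender's objective and the receiver's participation constraint are then affine in $(\alpha,\gamma)$, and conclude that an optimum lies on the boundary of the triangle $\{0\le\alpha\le\gamma\le 1\}$, whose three edges are exactly conditions (a)--(c). The only cosmetic differences are that you keep the receiver's constraint as the inequality $V\ge\beta$ and invoke a vertex-of-polygon argument, whereas the paper substitutes the equality $V=\beta$ (justified by Lemma~\ref{lemma:beta-utility}) and optimizes the linear objective along the resulting segment; the ``main obstacle'' you flag is in fact closed immediately by the labeling structure you already wrote down, since once $\Omega_{0,0}$-states take values in $\{\gamma,1\}$, $\Omega_{1,1}$-states in $\{0,\alpha\}$, and the mixed classes take $\alpha$ and $\gamma$, every pairwise monotonicity requirement of Theorem~\ref{th:main} reduces to $0\le\alpha\le\gamma\le 1$.
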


\begin{proof}
By the above argument, we can restrict the search for optimal configurations to those policies $p_{\alpha, \gamma}$ such that  $p_{\alpha, \gamma}(\omega) \in \{0,\alpha, \gamma, 1\}$ for all $\omega \in \Omega$. Next we show that there exists one such policy in which (a), (b), or (c) holds.

Suppose that $\alpha \ge \gamma$. By construction of $p_{\alpha, \gamma}$, the only subset of elements that might take values different from $0$ and $1$ are precisely $\Omega_D^0 := \emptyset$, $\Omega_D^1 = \{\omega_1\}$, $\Omega_D^2 = \{\omega_1, \omega_2\}$, $\ldots$, $\Omega_D^k = \Omega_D$. Moreover, for each such subset $\Omega_D^j$, if $p_{\alpha, \gamma}(\omega) \in \{\alpha, \gamma\}$ for all $\omega \in \Omega_D^j$ and $p_{\alpha, \gamma}(\omega) \in \{\alpha, \gamma\} \in \{0,1\}$ for all $\omega \not \in \Omega_D^j$, then the following equation must hold:

\begin{equation}\label{eq:condition-receiver}
\begin{array}{lll}
\beta  &  =  & \sum_{\omega \in \Omega_{0,0} \setminus \Omega_D^j} \mu(\omega) v(0,\omega) \\
& + & \sum_{\omega \in \Omega_{1,1} \setminus \Omega_D^j}  \mu(\omega) v(1,\omega) \\
& + & \sum_{\omega \in \Omega_{1,0} \cup (\Omega_D^j \cap \Omega_{0,0}) }  
\\
& & \mu(\omega) (\alpha \cdot v(0,\omega) + (1 - \alpha)v(1,\omega)) \\
& + &  \sum_{\omega \in \Omega_{0, 1} \cup (\Omega_D^j \cap \Omega_{1,1})} \\
& & \mu(\omega) (\gamma \cdot v(0,\omega) + (1 - \gamma)v(1,\omega)).\\
\end{array}
\end{equation}

This equation states that the expected utility that the receiver gets by setting $p_{\alpha, \gamma}(\omega)$ to $0,1, \alpha$, or $\gamma$ according to the above algorithm should be exactly $\beta$, which by Lemmas~\ref{lem:char} and \ref{lemma:beta-utility} is a necessary condition for $p_{\alpha, \gamma}$ to be optimal. Consider the following notation:

$$
\begin{array}{lll}
B_j  &  :=  & \beta - \sum_{\omega \in \Omega_{0,0} \setminus \Omega_D^j} \mu(\omega) v(0,\omega) \\
& - & \sum_{\omega \in \Omega_{1,1} \setminus \Omega_D^j}  \mu(\omega) v(1,\omega) \\
& - & \sum_{\omega \in \Omega_{1,0} \cup (\Omega_D^j \cap \Omega_{0,0}) }  \mu(\omega)v(1,\omega) \\
& - &  \sum_{\omega \in \Omega_{0, 1} \cup (\Omega_D^j \cap \Omega_{1,1})}  \mu(\omega) v(1,\omega),\\
\end{array}
$$

$$
\begin{array}{lll}
B'_j  &  :=  &  \sum_{\omega \in \Omega_{0,0} \setminus \Omega_D^j} \mu(\omega) u_1(0,\omega) \\
& + & \sum_{\omega \in \Omega_{1,1} \setminus \Omega_D^j}  \mu(\omega) u_1(1,\omega) \\
& + & \sum_{\omega \in \Omega_{1,0} \cup (\Omega_D^j \cap \Omega_{0,0}) }  \mu(\omega)u_1(1,\omega) \\
& + &  \sum_{\omega \in \Omega_{0, 1} \cup (\Omega_D^j \cap \Omega_{1,1})}  \mu(\omega) u_1(1,\omega),\\
\end{array}
$$

$$
A_j = \sum_{\omega \in \Omega_{1,0} \cup (\Omega_D^j \cap \Omega_{0,0}) }  \mu(\omega)(v(0,\omega) - v(1, \omega))
$$

$$
A'_j = \sum_{\omega \in \Omega_{1,0} \cup (\Omega_D^j \cap \Omega_{0,0}) }  \mu(\omega)(u(0,\omega) - u_1(1, \omega))
$$

$$
C_j = \sum_{\omega \in \Omega_{0,1} \cup (\Omega_D^j \cap \Omega_{1,1}) }  \mu(\omega)(v(0,\omega) - v(1, \omega))
$$

$$
C'_j = \sum_{\omega \in \Omega_{0,1} \cup (\Omega_D^j \cap \Omega_{1,1}) }  \mu(\omega)(u(0,\omega) - u_1(1, \omega)).
$$

The problem of maximizing the first sender's utility while satisfying Equation~\ref{eq:condition-receiver} can be formulated as one of maximizing $A'_j \alpha + C'_j \gamma + B'_j$ constrained to $A_j \alpha + C_j \gamma = B_j$ and $0 \le \gamma \le \alpha \le 1$. Since both equations are linear in $\alpha$ and $\gamma$, there exists a solution in the boundary of the domain, i.e., when either $\gamma = 0$, $\alpha = 1$, or $\alpha = \gamma$. The case in which $\gamma \ge \alpha$ also gives possible solutions when $\alpha = 0$ or $\gamma = 1$. This proves Lemma~\ref{lemma:char-general}.
\end{proof}

Lemma~\ref{lemma:char-general} shows that we can limit our search only to the values of $\alpha$ and $\gamma$ that satisfy either $\alpha = \gamma$ or $\alpha,\gamma \in \{0,1\}$. Thus, the following algorithm can find the optimal implementable policy $p$ in $O(n \log n)$ operations:

\begin{enumerate}
    \item \textbf{Step 1:} Compute the best possible mechanism for the first sender (i.e., set $p(\omega) = 1$ for $\omega \in \Omega_{0,1} \cup \Omega_{0,0}$ and $p(\omega) = 0$ for $\omega \in \Omega_{1,0} \cup \Omega_{1,1}$). If this mechanism gives the receiver a utility greater than or equal to $\beta$, return this configuration and terminate.
    \item \textbf{Step 2:} For each $j = 0,1,\ldots, k$, solve $A_j\alpha_j + C_j \gamma_j = B_j$ for (a) $\alpha_j = 0$, (b) $\alpha_j = 1$, (c) $\gamma_j = 0$, (d) $\gamma_j = 1$, and (e) $\alpha_j = \gamma_j$.
    \item \textbf{Step 3:} For each pair of $(\alpha_j, \gamma_j)$ found in Step 2, return the policy $p_{\alpha_j, \gamma_j}$ that maximizes $A'_j \alpha_j + C'_j \gamma_j + B'_j$.
\end{enumerate}

As in Section~\ref{sec:common-interest-algo}, all the partial sums used in the algorithm can be precomputed in linear time, and thus the algorithm takes at most $O(n \log n)$ operations since there is an overhead of $O(n \log n)$ operations required to sort the elements of $\Omega_D$.

\end{document}